\documentclass[conference]{IEEEtran}
\IEEEoverridecommandlockouts
\usepackage{amsmath,amssymb,amsfonts}
\usepackage{algorithmic}
\usepackage{graphicx}
\usepackage{textcomp}
\usepackage{xcolor}
\def\BibTeX{{\rm B\kern-.05em{\sc i\kern-.025em b}\kern-.08em
    T\kern-.1667em\lower.7ex\hbox{E}\kern-.125emX}}

\newtheorem{theorem}{Theorem}
\newtheorem{lemma}{Lemma}
\newtheorem{remark}{Remark}
\newtheorem{assumption}{Assumption}
\newtheorem{definition}{Definition}
\newtheorem{proof}{Proof}
\newcommand{\norm}[1]{\left\lVert#1\right\rVert}
\newcommand\scalemath[2]{\scalebox{#1}{\mbox{\ensuremath{\displaystyle #2}}}}
\usepackage{subfigure}
\usepackage{pdfpages}
    
\begin{document}

\title{Distributed Robust Continuous-Time Optimization Algorithms for Time-Varying Constrained Cost\\
}

\author{\IEEEauthorblockN{ Zeinab Ebrahimi}
\IEEEauthorblockA{\textit{School of Mechanical and Manufacturing Engineering} \\
\textit{University of New South Wales}\\
Sydney, Australia \\
z.ebrahimi@unsw.edu.au}
\and
\IEEEauthorblockN{ Mohammad Deghat}
\IEEEauthorblockA{\textit{School of Mechanical and Manufacturing Engineering} \\
\textit{University of New South Wales}\\
Sydney, Australia \\
m.deghat@unsw.edu.au}
}

\maketitle

\begin{abstract}
This paper presents a distributed continuous-time optimization framework aimed at overcoming the challenges posed by time-varying cost functions and constraints in multi-agent systems, particularly those subject to disturbances. By incorporating tools such as log-barrier penalty functions to address inequality constraints, an integral sliding mode control for disturbance mitigation is proposed. The algorithm ensures asymptotic tracking of the optimal solution, achieving a tracking error of zero. The convergence of the introduced algorithms is demonstrated through Lyapunov analysis and nonsmooth techniques. Furthermore, the framework's effectiveness is validated through numerical simulations considering two scenarios for the communication networks.
\end{abstract}

\begin{IEEEkeywords}
Distributed time-varying optimization, multi-agent systems, disturbances, time-varying constraints, sliding mode control
\end{IEEEkeywords}

\section{Introduction}
Over the last decades, there has been a significant increase in research focused on distributed optimization within multi-agent systems, driven by its wide range of potential applications, including smart grids \cite{cherukuri2016initialization}, machine learning \cite{yi2022primal}, and multi-robot systems \cite{li2017fixed}. Our research lies in a particular category of distributed convex optimization problems. Within the distributed optimization framework, each agent is allocated a local cost function. The joint aim is to achieve optimization by minimizing a global cost function, which represents the aggregated objectives of all agents involved.

Extensive research has been carried out on distributed continuous-time optimization algorithms \cite{lin2016distributed,feng2019finite}, motivated by the widespread application of continuous-time models in real-world systems, such as the coordination optimization problems in multi-agent systems. These algorithms have been adapted for a variety of scenarios, including attaining finite-time convergence \cite{feng2019finite}, first-order, high-order, and fractional-order dynamics \cite{huang2019distributed}. 

The aforementioned studies have predominantly focused on time-invariant optimization scenarios with fixed objectives and constraints. 
However, the reality of applications such as resource allocation in changing environments \cite{wangBo2020distributed}, tracking moving targets \cite{lee2015multirobot}, and online optimization \cite{jadbabaie2015online} demands algorithms capable of adapting to time-varying cost functions. In these circumstances, the optimal solution varies over time, creating a trajectory instead of a static point. 
A pivotal development in this area has been the work of Rahili and Ren \cite{rahili2016distributed}, who proposed innovative results for tackling unconstrained time-varying optimization in systems with single and double-integrator dynamics over undirected networks. Their methodologies, including average gradient tracking and state estimator-based algorithms, mark a significant step forward. 

The literature on time-varying distributed optimization with constraints remains notably limited. Researchers such as He \textit{et al}. \cite{he2021continuous} introduced a predictive correction method with the limitation of requiring identical Hessian matrices for each agent's cost function, and Wang \textit{et al}. \cite{wang2020distributed} overlooked the nonlinear inequality constraints in the resource allocation problem. Meanwhile, Sun \textit{et al}. \cite{sun2020distributed} explored the adaptation of penalty approaches to avoid the need for uniform Hessian matrices. These studies highlight a gap in the literature regarding the complexities introduced by nonuniform local inequality constraints. 

In the domain of distributed optimization, the robustness of algorithms against disturbances is a critical issue, as their absence can lead to unreliable system performance or even instability. With growing interest in frameworks that can manage both disturbances and time-varying costs, traditional methods like sliding mode and internal model control often fall short, focusing mainly on static cost functions. Recently, Zhang \textit{et al}. \cite{zhang2023continuous} proposed a robust distributed optimization framework for multi-agent systems with time-varying costs affected by disturbances. However, this approach did not take into account time-varying constraints. 
This paper studies the distributed continuous-time optimization problem in the context of time-varying cost functions incorporating time-varying nonuniform local inequality constraints subject to disturbances.

In this work, we synthesize insights from existing research on distributed continuous-time optimization in multi-agent systems, particularly those dealing with time-varying cost functions and the challenge of disturbances. Unlike the focused approaches of Zhang \textit{et al}. \cite{zhang2023continuous}, who proposed a robust optimization framework for systems with time-varying costs but without considering variable constraints, and Sun \textit{et al}. \cite{sun2022distributed2023}, who addressed distributed optimization with an emphasis on overcoming the limitations of identical Hessian matrices without exploring disturbance management, our contributions bridge these gaps by presenting a unified framework. Our methodology integrates a dual strategy of log-barrier penalty functions for dynamic constraint handling, paired with an integral sliding mode control designed for disturbance rejection. This integrated approach enables robust optimization in dynamic conditions. It not only suppresses disturbances but also flexibly manages varying local constraints. Theoretical foundations for asymptotic convergence are provided through nonsmooth analysis and Lyapunov theory. Theoretical analysis ensures the reliability of the proposed framework, which is substantiated by numerical simulations. We present two distinct simulation scenarios that demonstrate the applicability of our theorem across both connected and less-connected network topologies, thereby validating our theoretical framework.

\section{Notations and preliminaries}

\subsection{Notations}
The following notations are adopted throughout this paper.\\
Let $\mathbb{R}$ represent the set of all real numbers, and $\mathbb{R}_{+}$ denote the subset of positive real numbers. The symbols $\mathbb{R}^n$ and $\mathbb{R}^{n \times m}$ are used to indicate \(n\)-dimensional real vectors and \(n \times m\) real matrices, respectively. The cardinality of a set $S$ is represented by $|S|$. The vectors $\textbf{1}_n$ and $\textbf{0}_n$ denote the $n$-dimensional vector of ones and zeros, respectively. The identity matrix of size $n \times n$ is denoted by $I_n$. 
For any vector \(h = [h_1, \ldots, h_n]^\top \in \mathbb{R}^n\), the term \(\text{diag}(h) \in \mathbb{R}^{n \times n}\) refers to the diagonal matrix where \(h\)'s elements form the diagonal. We define \(\text{sig}(h)^\alpha= \text{sign}(h) |h|^\alpha\), for any \(h \in \mathbb{R}^n\) where \(\alpha > 0\). It is noteworthy that \(d|h|^{\alpha+1}/dh = (\alpha + 1) \text{sig}(h)^\alpha\) and \(h\text{sig}(h)^\alpha = |h|^{\alpha+1}\).
We define \(\text{sign}(h) = [\text{sign}(h_1), \ldots, \text{sign}(h_n)]^\top \in \mathbb{R}^n\) and \(\text{sig}(h)^\alpha = [\text{sig}(h_1)^\alpha, \ldots, \text{sig}(h_n)^\alpha]^\top \in \mathbb{R}^n\). The norms \(\|h\|_1\), \(\|h\|_2\), and \(\|h\|_\infty\) denote the 1-norm, 2-norm, and \(\infty\)-norm of \(h \in \mathbb{R}^n\), respectively.
The Kronecker product is symbolized by \(\otimes\), \(\Bar{co}\) denotes the convex closure, and $B(h, \delta)$ represents the open ball of radius $\delta$ centered at $h$. The symbol $\prec$ denotes an element-wise comparison operator. The gradient and Hessian of a function \(f(h, t)\) with respect to the vector \(h\) are denoted as \(\nabla f(h, t)\) and \(\nabla^2 f(h, t)\), respectively. The notation \(\partial f(h, t)/\partial t\) is used to represent the partial derivative of \(f(h, t)\) with respect to \(t\), while \(\dot{f}(h, t)\) denotes the time derivative of \(f(h, t)\), defined explicitly as \(\dot{f}(h, t) = \nabla f(h, t) \dot{h} + \partial f(h, t)/\partial t\).

\subsection{Graph Theory}

Consider a graph $\mathcal{G}=(\mathcal{V}, \mathcal{E}, \mathcal{A})$, where $\mathcal{V}$ denotes the set of nodes indexed by $\mathcal{V} = \{1, \ldots, N\}$, $\mathcal{E} \subseteq \mathcal{V} \times \mathcal{V}$ represents the set of edges and $\mathcal{A}=[a_{ij}] \in \mathbb{R}^{N \times N}$ is the adjacency matrix, where $a_{ij} = 1$ if there exists an edge from node $i$ to node $j$, and $a_{ij} = 0$ otherwise. The graph's Laplacian matrix is denoted by $\mathcal{L}=[l_{ij}] \in \mathbb{R}^{N \times N}$ and is defined as $l_{ij} = -a_{ij}$, $\forall i \neq j$, and $l_{ii} = \sum_{j=1}^{N} a_{ij}$. Let $\mathcal{N}_i = \{j \in \mathcal{V} | (j, i) \in \mathcal{E}\}$ represent the set of neighboring nodes to node $i$. 
A graph is considered connected if a sequence of edges exists that allows one to move from any one node to any other. For connected undirected graphs, the Laplacian matrix $\mathcal{L}$ is positive semidefinite. The incidence matrix, denoted by $\mathcal{D} = [d_{ij}] \in \mathbb{R}^{N \times |\mathcal{E}|}$ is defined such that $d_{ik} = -1$ if the $k^{th}$ edge starts from node $i$, $d_{ik} = 1$ if it ends at node $i$, and $d_{ik} = 0$ otherwise. Notably, for an undirected graph, the conditions $\mathcal{L} \mathbf{1}_N = \mathbf{0}_N$, $\mathcal{L}^\top = \mathcal{L}$, and $\mathcal{L} = \mathcal{D}\mathcal{D}^\top$ hold \cite{edmonds2022undirected}. 

\subsection{Definitions and Lemmas}

This subsection introduces crucial definitions and lemmas that will be utilized in our primary results.

\begin{definition}[Filippov Solution] \cite[Definition 2.1]{shevitz1994lyapunov} \label{def 1}
Given the vector differential equation
\begin{equation} \label{flippov}
    \dot{x} = f(x, t)
\end{equation}
with $f: \mathbb{R}^n \times \mathbb{R} \to \mathbb{R}^n$ being measurable and bounded locally. A vector function $x(\cdot)$ is termed a Filippov solution to (\ref{flippov}) over interval $[t_0, t_1]$, if $x(\cdot)$ is absolutely continuous on $[t_0, t_1]$ and satisfies $\dot{x}(t) \in K[f](x, t)$ for almost all $t \in [t_0, t_1]$. Here, $K[f](x, t)$, the Filippov set-valued map for $f(x, t)$, is defined as
\[K[f](x, t) := \bigcap_{\delta>0}\bigcap_{\mu(N)=0} \Bar{co} \left(f(B(x, \delta) - N, t)\right),\]
with the intersection taken over all null Lebesgue measure sets $N$.
\end{definition}
\begin{definition} [Clarke's Generalized Gradient] \cite[Definition 2.2]{shevitz1994lyapunov} \label{def 2}
    For a locally Lipschitz continuous function $V: \mathbb{R}^n \to \mathbb{R}$, the generalized gradient at a point $(x, t)$, denoted by $\partial V(x, t)$, is
\[\partial V(x, t) = \Bar{co}\left\{\lim \nabla V(x, t) \,|\, (x_i, t_i) \to (x, t), (x_i, t_i) \notin \Omega_V\right\}\]
where $\Omega_V$ denotes the Lebesgue measure zero set where $V$'s gradient is undefined.
\end{definition}
\begin{definition}[Chain Rule] \cite[Theorem 2.2]{shevitz1994lyapunov} \label{def 3}
     If $x(\cdot)$ is a Filippov solution to $\dot{x} = f(x, t)$ and $V(x): \mathbb{R}^n \to \mathbb{R}$ is locally Lipschitz continuous function, then for almost all $t$,
$\frac{d}{dt}V(x(t)) \in \dot{\tilde{V}}$,
where $\dot{\tilde{V}}$ is the set-valued Lie derivative, defined by $\dot{\tilde{V}} := \bigcap_{\xi \in \partial V} \xi^\top K[f]$.
\end{definition} 

The following lemmas are used throughout the paper.
\begin{lemma} [\cite{polyakov2011nonlinear}] \label{lemma 1}
    Consider the nonlinear dynamical system $\dot{x}(t) = f(x, t)$, with $x \in \mathbb{R}^n$ and $f: \mathbb{R}^n \times \mathbb{R}_+ \to \mathbb{R}^n$. If a Lyapunov function $V(x)$ exists, satisfying $\dot{V}(x) \leq -\alpha V^p(x) - \beta V^q(x)$, where $\alpha > 0$, $\beta > 0$, $0 < p < 1$, and $q > 1$, then the origin is fixed time stable within settling time $T_f \leq \frac{1}{\alpha(1-p)} + \frac{1}{\beta(q-1)}$.
\end{lemma}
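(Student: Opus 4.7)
The plan is to reduce the Lyapunov differential inequality to a scalar comparison and analyze the trajectory in two regimes separated by the level set $\{V = 1\}$. On $\{V \geq 1\}$ the superlinear term $-\beta V^q$ dominates the right-hand side, while on $\{V \leq 1\}$ the sublinear term $-\alpha V^p$ dominates; in each regime I would drop the other (nonpositive) term, solve the resulting one-term scalar ODE by separation of variables, and then add the two time bounds.

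Concretely, in the first phase (assuming $V(0) \geq 1$; otherwise this phase is vacuous) I would use $\dot V \leq -\beta V^q$ and integrate from $V(0)$ down to $V(t_1) = 1$, producing
\[
t_1 \;\leq\; \frac{1 - V(0)^{1-q}}{\beta(q-1)} \;\leq\; \frac{1}{\beta(q-1)},
\]
where the last inequality uses $V(0)^{1-q} \in (0,1]$, valid since $q>1$ and $V(0) \geq 1$. For the second phase, starting from $V(t_1) \leq 1$, I would use $\dot V \leq -\alpha V^p$ and integrate from $V = 1$ down to $V = 0$, obtaining $t_2 - t_1 \leq \frac{1}{\alpha(1-p)}$ (the worst case being $V(t_1) = 1$). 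Summing the two bounds gives $T_f \leq \frac{1}{\alpha(1-p)} + \frac{1}{\beta(q-1)}$, which is the claim.

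The substantive point — and the reason the settling time is genuinely \emph{fixed} rather than merely finite — is the uniformity in Phase 1: the superlinear decay of $V^q$ caps the escape time from $\{V \geq 1\}$ independently of how large $V(0)$ is, which is exactly the feature that distinguishes this lemma from ordinary finite-time stability results. The only other care required is measure-theoretic: justifying the comparison argument when $V$ is only absolutely continuous, or, in the nonsmooth setting consistent with Definitions \ref{def 1}--\ref{def 3}, interpreting $\dot V$ via the set-valued Lie derivative $\dot{\tilde V}$ and requiring the inequality to hold for every element of $\dot{\tilde V}$. This is standard bookkeeping and does not alter the final bound.
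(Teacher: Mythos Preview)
Your argument is correct and is exactly the standard two-phase comparison proof of this fixed-time stability result. Note, however, that the paper does not supply its own proof of this lemma: it is quoted directly from \cite{polyakov2011nonlinear} and used as a black box in Part~1 of the proof of Theorem~\ref{theorem 1}, so there is nothing in the paper to compare against beyond the cited source, whose proof your proposal faithfully reproduces.
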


\begin{lemma}[\cite{andrieu2008homogeneous}] \label{lemma 2}
    Consider the system $\dot{x}(t) = f(x)$ with initial condition $x(0) = x_0$ and $x \in \mathbb{R}^n$, where $f: \mathbb{R}^n \to \mathbb{R}^n$ and the origin is an equilibrium point for the system. Assume $f(x)$ exhibits a homogeneous vector field in the bi-limit with associated triples $(r_0, k_0, f_0)$ and $(r_{\infty}, k_{\infty}, f_{\infty})$. If the origins of systems $\dot{x} = f(x)$, $\dot{x}_0 = f_0(x)$, and $\dot{x}_{\infty} = f_{\infty}(x)$ are globally asymptotically stable and the condition $k_{\infty} > 0 > k_0$ holds, then the origin achieves fixed-time stability.
\end{lemma}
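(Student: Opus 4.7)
The plan is to exploit the fact that the two homogeneity degrees play complementary roles: the negative degree $k_0<0$ at the origin forces finite-time local convergence, while the positive degree $k_\infty>0$ at infinity forces uniform finite-time entrance into any bounded neighborhood. Fixed-time stability is precisely the conjunction of these two phenomena, so the whole proof reduces to establishing each separately and then adding the two settling-time bounds.

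First I would invoke a converse Lyapunov theorem of Rosier type for homogeneous vector fields to produce, for each of the limit systems $\dot{x}=f_0(x)$ and $\dot{x}=f_\infty(x)$, a smooth positive definite Lyapunov function $V_0$, $V_\infty$ that is homogeneous (with positive degree) with respect to the corresponding dilation $r_0$, $r_\infty$. A standard homogeneity-degree count then yields differential inequalities of the form $\dot{V}_0 \le -c_0 V_0^{p}$ with $p<1$ on a neighborhood of the origin, and $\dot{V}_\infty \le -c_\infty V_\infty^{q}$ with $q>1$ on the complement of a compact set. Integrating these inequalities as in Lemma~\ref{lemma 1} gives (i) finite-time convergence of $\dot{x}=f_0(x)$ to the origin, and (ii) for $\dot{x}=f_\infty(x)$, a capture time into any sublevel set $\{V_\infty \le c\}$ that is \emph{uniform} in the initial condition, because the integral of $V^{-q}dV$ from $+\infty$ to $c$ is finite whenever $q>1$.

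Next I would transfer these properties to the original system using the bi-limit approximation. By definition of bi-limit homogeneity, after rescaling by the dilations, the trajectories of $\dot{x}=f(x)$ approach those of $f_0$ on arbitrarily small neighborhoods of the origin and those of $f_\infty$ on the exterior of arbitrarily large balls. A robustness argument for the Lyapunov inequalities (perturbing $f_0$, $f_\infty$ by a vanishing remainder and using continuity of $\dot{V}$ in $f$) then shows that $\dot{V}_0\le -c_0' V_0^{p}$ still holds on some ball $B(0,r)$ and $\dot{V}_\infty\le -c_\infty' V_\infty^{q}$ still holds on the complement of some larger ball $B(0,R)$, with $r<R$. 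Global asymptotic stability of $\dot{x}=f(x)$ ensures that no trajectory can stall in the annulus $\{r\le \|x\|\le R\}$: it must be traversed in some finite time $T_{\mathrm{ann}}$ that is uniformly bounded by compactness and continuity. Combining the three bounds yields a total settling time $T_f \le T_\infty + T_{\mathrm{ann}} + T_0$ independent of $x(0)$, which is the definition of fixed-time stability.

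The main obstacle I expect is the rigorous handling of the transition region and of the uniform capture time from infinity. For a merely GAS homogeneous system with positive degree, a uniform-in-initial-condition bound is not automatic in a non-homogeneous setting, but the converse Lyapunov function supplied by homogeneity decays with a rate that is a pure power of $V$ alone, which is what unlocks the uniform estimate. Likewise, the perturbation argument needed to pass from $f_0$, $f_\infty$ to $f$ must be quantitative enough to preserve the strict power-law decay of the Lyapunov derivative; this is where the bi-limit formalism of Andrieu--Praly--Astolfi becomes essential, since it provides the precise scaling under which the remainder is dominated by the principal homogeneous part. Once these two technical points are in place, the settling-time estimate follows from the standard comparison lemma applied twice.
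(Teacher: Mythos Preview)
The paper does not supply a proof of this lemma: it is stated as a preliminary result and attributed to \cite{andrieu2008homogeneous}, with no argument given. There is therefore nothing in the paper against which to compare your proposal.

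That said, your outline is broadly in the spirit of the original Andrieu--Praly--Astolfi argument and would lead to a correct proof. One difference worth noting: rather than working with two separate Lyapunov functions $V_0$ and $V_\infty$ and patching them across an annulus, the cited source constructs a \emph{single} $C^1$ Lyapunov function $V$ that is itself homogeneous in the bi-limit (with appropriately chosen degrees), so that the differential inequality $\dot{V}\le -c\,V^{p}$ with $p<1$ holds near the origin and $\dot{V}\le -c\,V^{q}$ with $q>1$ holds far away \emph{for the original system} directly. This bypasses the separate perturbation step you describe and makes the annulus-transit bound automatic (since $\dot{V}<0$ globally and the annulus is a compact sublevel region of a single $V$). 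Your two-function route is conceptually equivalent and arguably more transparent, but the gluing and the quantitative robustness of the Lyapunov decay under the bi-limit remainder are exactly the places where you would have to work hardest, as you yourself flag; the single-$V$ construction absorbs that difficulty into the Lyapunov existence theorem once and for all.
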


\begin{lemma}[\cite{zuo2014new}] \label{lemma 3}
    Given any non-negative $q_1, q_2, \ldots, q_n \geq 0$ and $0 < p \leq 1$, it follows that $\sum_{i=1}^n q_i^p \geq (\sum_{i=1}^n q_i)^p$. Additionally, for $p > 1$, $\sum_{i=1}^n q_i^p \geq n^{1-p}(\sum_{i=1}^n q_i)^p$.
\end{lemma}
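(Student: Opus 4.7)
The plan is to handle the two regimes separately via elementary normalization and convexity; both claims reduce to classical one-line inequalities.

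For the first part ($0 < p \leq 1$), I would first dispose of the trivial instance $S := \sum_{i=1}^n q_i = 0$, in which every $q_i$ vanishes. Otherwise I normalize by setting $t_i := q_i/S \in [0,1]$ so that $\sum_i t_i = 1$. The crux is the elementary observation that $t^p \geq t$ whenever $t \in [0,1]$ and $0 < p \leq 1$, since raising a number in $[0,1]$ to a smaller positive exponent yields a larger value. Summing this over $i$ and multiplying through by $S^p$ (using $q_i^p = S^p t_i^p$) yields $\sum_i q_i^p \geq S^p = (\sum_i q_i)^p$ in a single line.

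For the second part ($p > 1$), the natural tool is convexity of the map $x \mapsto x^p$ on $[0,\infty)$. Applying Jensen's inequality with the uniform weights $1/n$ gives
\[
\Bigl(\tfrac{1}{n}\sum_{i=1}^n q_i\Bigr)^p \;\leq\; \tfrac{1}{n}\sum_{i=1}^n q_i^p,
\]
which rearranges immediately to $\sum_{i=1}^n q_i^p \geq n^{1-p}\bigl(\sum_{i=1}^n q_i\bigr)^p$.

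I do not anticipate any substantive obstacle, since each half collapses to a single classical step. The only points requiring mild care are the boundary exponent $p = 1$, at which both inequalities become equalities, and the presence of zero entries, which are absorbed harmlessly because $0^p = 0$ for $p > 0$. A conceptually cleaner unified route would be to invoke the power-mean inequality directly: as the exponent $p$ crosses one, the direction of the comparison flips, which is precisely why the two halves of the lemma point in opposite directions.
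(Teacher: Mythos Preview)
Your argument is correct: the normalization trick with $t^p \geq t$ for $t\in[0,1]$ handles $0<p\leq 1$, and Jensen applied to the convex map $x\mapsto x^p$ handles $p>1$; the edge cases $p=1$ and $q_i=0$ are treated properly. There is nothing to compare against, however, because the paper does not prove this lemma---it is quoted from \cite{zuo2014new} and used as a black box in Part~1 of the proof of Theorem~\ref{theorem 1}.
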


\section{Problem Formulation and Main Results}

Consider a multi-agent system comprising $N$ agents. Each agent is represented as a node in the undirected graph $\mathcal{G}$ and is only limited to communications with its neighbors within the network. Suppose that each agent is modeled as the subsequent continuous-time first-order 
\begin{equation} \label{dynamic}
           \dot{x}_i(t) = u_i(t) + d_i(t), \quad i \in \mathcal{V} 
\end{equation}
where $x_i(t) \in \mathbb{R}^n$ and $u_i(t) \in \mathbb{R}^n$ are the state and control input of the $i$th agent, and $d_i(t) \in \mathbb{R}^n$ is an unknown disturbance impacting the system. Each agent \(i\) is allocated with a differentiable local time-varying cost function \(f_i(x, t): \mathbb{R}^n \times \mathbb{R}_+ \to \mathbb{R}\), which is private to that agent. The overall cost function \(F(x): \mathbb{R}^n \times \mathbb{R}_+ \to \mathbb{R}\) for all agents is defined as \(F(x) = \sum_{i=1}^N f_i(x, t)\). The objective of this paper is to formulate a distributed control law $u_i(t)$ for the first-order system (\ref{dynamic}) by using only local information and interactions with neighbor agents, ensuring that all agents converge to an optimal state $x^*(t) \in \mathbb{R}^n$. The optimization problem is defined as
\begin{align}  \label{opt inequality bounded}\nonumber
    & \min_{x_i} \sum_{i=1}^{N} f_i(x_i(t), t) \\ 
    &\text{s.t.} \quad g_i(x_i(t), t) \leq 0, \quad x_i(t) = x_j(t), \quad \forall i, j \in \mathcal{V}
\end{align}
where $g_i(x_i, t)$$:\mathbb{R}^n \times \mathbb{R}_+ \rightarrow \mathbb{R}^{q_i}$ are local inequality constraint functions and $q_i$ represents the number of local inequality constraints for agent $i$. Here, the goal is that each state $x_i$, $\forall i \in \mathcal{V}$, converges to the optimal solution $x^*(t)\in \mathbb{R}^n$, i.e.,
\begin{equation}
    \lim_{t \to \infty} \big( x_i(t) - x^*(t) \big) = 0.
\end{equation}
For ease of notation, we will eliminate the time index $t$ from the variables $x_i(t)$, $d_i(t)$ and $u_i(t)$ in the majority of the subsequent sections, retaining it only in specific instances where deemed necessary. To address the above distributed time-varying optimization problems with disturbances, we introduce the following assumptions.

\begin{assumption}[Graph connectivity] \label{ass graph}
The graph $\mathcal{G}$ is fixed, undirected, and connected.
\end{assumption}

\begin{assumption}[Convexity] \label{ass convexity}
    All the objective functions $f_i(x_i, t)$ are twice continuously differentiable with respect to $x_i$ and continuously differentiable concerning $t$. Furthermore, all the objective functions $f_i(x_i, t)$ are uniformly strongly convex in $x_i$, for all $t \geq 0$. Moreover, all the inequality constraint functions $g_i(x_i, t)$ are twice continuously differentiable to $x_i$, continuously differentiable concerning $t$, and uniformly convex in $x_i$, for all $t \geq 0$. Finally, an optimal solution for the optimization problem (\ref{opt inequality bounded}) exists and is unique. 
\end{assumption}
\begin{assumption}[Slater’s condition] \label{ass slater}
    For all $t \geq 0$, there exists at least one $x$ such that $g_i(x, t) \prec 0_{q_i}$ for all $i \in \mathcal{V}$. 
\end{assumption}

\begin{assumption}[Bounded disturbance]\label{ass bounded disturbance}
    The disturbance $d_i(t)$ is bounded. That is, $\|d_i(t)\|_{\infty} < D_0$, $i \in \mathcal{V},$ where the upper bound $D_0$ is assumed to be known.
\end{assumption}

\begin{remark}
    Assumption \ref{ass convexity}, which relates to the uniform strong convexity of the objective function, ensures that the optimal path $x^*(t)$ is unique for all $t \geq 0$. It should be noted that Assumption \ref{ass convexity} is commonly satisfied in many practical applications, such as coordinated path planning and formation control. Similar assumptions are used in recent research on distributed time-varying optimization \cite{sun2022distributed2023,fazlyab2017prediction, ding2023distributed}. Assumption \ref{ass bounded disturbance} necessitates merely the awareness of an upper bound for disturbances, a condition that is both minimal and frequently applied in prior studies. Various disturbance forms meet this criterion, including constant, sinusoidal, and harmonic disturbances, as discussed in \cite{feng2019finite}.
\end{remark}

\begin{remark}
    The framework for distributed, time-varying optimization with constraints in (\ref{opt inequality bounded}) is widely applicable in distributed cooperative control scenarios. This includes tasks such as navigation for multiple robots \cite{lee2015multirobot, verscheure2009time}, and managing resources within power networks \cite{cherukuri2016initialization}. 
\end{remark}

Define penalized objective function of the $i$th agent $\tilde{L}_i(x_i, t)$ for the optimization problem (\ref{opt inequality bounded}) as:
\begin{equation} \label{penalty}
    \tilde{L}_i(x_i, t) = f_i(x_i, t) - \frac{1}{\rho_i(t)} \sum_{j=1}^{q_{i}} \log \big(\sigma_i(t) - g_{ij}(x_i, t)\big)
\end{equation}
where $g_{ij}(x_i, t): \mathbb{R}^n \times \mathbb{R}_+ \rightarrow \mathbb{R}$ denotes the $j$th component of function $g_i(x_i, t)$, $\rho_i(t) \in \mathbb{R}_+$ is a time-varying barrier parameter, and $\sigma_i(t) \in \mathbb{R}_+$ is a time-varying slack function satisfying
\begin{equation}
   \rho_i(t) = a_{i1}e^{a_{i2} t}, \quad \sigma_i(t) = a_{i3}e^{-a_{i4} t}, \hspace{0.3cm} a_{i1}, a_{i2}, a_{i3}, a_{i4} \in \mathbb{R}_+. 
\end{equation}
\begin{remark}
    It should be noted that the barrier parameter $\rho_i(t)$ and the slack variable $\sigma_i(t)$ should be chosen such that the optimal solution $x^*(t)$ in (\ref{opt inequality bounded}) is attained and the tracking error approximation tends to zero. Therefore, $\rho_i(t)$ must be positive, monotonically increasing, asymptotically go to infinity as $t \rightarrow \infty$, and be bounded within a finite time frame, while $\sigma_i(t)$ must be nonnegative and approach zero as $t \rightarrow \infty$ as proved in \cite[Lemma 1]{fazlyab2017prediction}.
\end{remark}
Inspired by the integral mode approach \cite{zhang2023continuous} and the distributed control algorithm
for the time-varying constrained optimization problem in \cite{sun2022distributed2023}, the distributed controllers for multi-agent system (\ref{dynamic}) are designed as
\begin{subequations} \label{controller} \begin{align} \label{ui}
            & u_i(t) = u_{i1}(t) + u_{i2}(t)\\ \label{ui1}
            & u_{i1}(t) = - \beta \left( \nabla^2 \tilde{L}_i(x_i, t) \right)^{-1} \sum_{j \in \mathcal{N}_i} \text{sign}(x_i - x_j) \\ \nonumber
            &  \hspace{0.4cm} - \left( \nabla^2 \tilde{L}_i(x_i, t) \right)^{-1} \left( \nabla \tilde{L}_i(x_i, t) + \frac{\partial}{\partial t} \nabla \tilde{L}_i(x_i, t) \right) \\ \label{ui2}
            & u_{i2}(t) = - k_0 \text{sign}(s_i) - k_1 \text{sig}(s_i)^{\rho_1} - k_2 \text{sig}(s_i)^{\rho_2} \\ \label{siii} 
            & s_i(t) = x_i(t) - \int_{0}^{t} u_{i1}(\tau) d\tau
            \end{align}
\end{subequations}
where $k_0 > D_0$, $k_1, k_2 > 0$, $0 < \rho_1 < 1$, $ \rho_2 > 1$, and $\beta >0$ is a fixed control gain. $s_i(t)$ is the designed distributed sliding manifold. Let $\psi_i$ be defined as follows:
\begin{equation}
    \psi_i=\left( \nabla^2 \tilde{L}_i(x_i, t) \right)^{-1} \left( \nabla \tilde{L}_i(x_i, t) + \frac{\partial}{\partial t} \nabla \tilde{L}_i(x_i, t) \right).
\end{equation}

Note that the domain of the penalized objective function $\tilde{L}_i(x_i, t)$ is $D_i(t) = \{x_i \in \mathbb{R}^n | g_i(x_i, t) \prec \sigma_i(t)\mathbf{1}_{q_i} \}$. To ensure that the distributed controller (\ref{ui})--(\ref{siii}) functions correctly, the initial states $x_i(0)$ must satisfy the following condition:
\begin{equation} \label{initial}
 g_{ij} \left(x_i(0), 0 \right) < \sigma_i(0)   
\end{equation}

This requirement guarantees that the initial states of each agent are within the feasible region defined by the slack function $\sigma_i(0)$, facilitating the successful application of the controller.

\begin{remark}
    A comparable problem was addressed in \cite{zhang2023continuous} where the nonlinear inequality constraints were not taken into account. Consequently, the control strategies proposed in \cite{zhang2023continuous} can not be directly implemented. The controller algorithm proposed in (\ref{controller}) consists of three components: the first term, $- \beta \left( \nabla^2 \tilde{L}_i(x_i, t) \right)^{-1} \sum_{j \in \mathcal{N}_i} \text{sign}(x_i - x_j)$, aims to tackle the consensus problem by leading all agents toward achieving consensus on states $\left(\lim_{t \rightarrow \infty} \norm{x_i(t) - \frac{1}{N} \sum_{j=1}^N x_j(t)}_2=0 \right)$. The second term $(\psi_i(t))$ serves as an optimizer to minimize the penalized objective function $\tilde{L}_i(x_i, t)$ provided in (\ref{penalty}). This step includes the use of log-barrier penalty functions to integrate inequality constraints into the penalized objective framework. The final term, described in (\ref{ui2}), plays a crucial role in suppressing disturbance, thus ensuring the system preserves the nominal optimization algorithm's performance during the sliding phase, even with disturbances. Notably, this paper's controller algorithm (\ref{ui1}) eliminates the restriction that the Hessians of all the local objective functions must be identical. 
\end{remark}
\begin{assumption}[Bounds on Objective Functions] \label{ass bound obj}
If all local states $x_i$ are bounded, then for all $i \in \mathcal{V}$ and $t \geq 0$, there exists $\bar{\alpha}$ such that
$\sup_{t \in [0,\infty)} \left\| \frac{\partial}{\partial t} \nabla f_i(x_i, t) \right\|_2 \leq \bar{\alpha}$.

\end{assumption}

\begin{assumption}[Bounds on Inequality Constraint Functions] \label{ass bound ineq cons} If all local states $x_i$ are bounded, there exist constants $\bar{\beta}$ and $\bar{\gamma}$ ensuring that $\sup_{t \in [0,\infty)} \left\| \frac{\partial}{\partial t} \nabla g_{ij}(x_i, t) \right\|_2 \leq \bar{\beta}$, and $\sup_{t \in [0,\infty)} \left\| \frac{\partial}{\partial t} g_{ij}(x_i, t) \right\|_2\leq \bar{\gamma}$, for all $i \in \mathcal{V}$, and $j=1, \cdots, q_i$.
    
\end{assumption}

\begin{remark}
    The presence of the piecewise-differentiable sign function within algorithm (\ref{controller}) necessitates considering solutions in terms of Filippov's framework. Given the sign function's measurability and local boundedness, Filippov solutions for our system's dynamics are guaranteed to exist \cite{filippov2013differential,cortes2008discontinuous}.
\end{remark}

\begin{theorem} \label{theorem 1}
    Suppose that Assumptions 1-6 and the initial condition (\ref{initial}) hold for the system (\ref{dynamic}) under the controller (\ref{controller}). Let $\bar{\psi}$ be a constant satisfying $\|\psi_i(t) \|_2 \leq \bar{\psi}$ $\forall i \in \mathcal{V}$. Then for a constant $\beta$ satisfying 
    \begin{equation} \label{beta}
        \beta \geq \frac{2 \bar{\psi}  n^2 |\mathcal{E}|}{\min_{i \in \mathcal{V}} \left\{ \bar{\lambda}_{\min} \left((\nabla^2 \tilde{L}_i)^{-1}\right)\right\}} + \epsilon
    \end{equation}
    where $\epsilon> 0$ is a constant, all the states $x_i$ will achieve consensus in finite time, i.e., there exists a time $T_2$ such that $\|x_i(t) - x_j(t)\|_2 = 0$, for all $i, j \in \mathcal{V}$ and for all $t > T_2$, and achieve $x_i(t) \rightarrow x^*(t)$ for $i \in \mathcal{V}$. 
\end{theorem}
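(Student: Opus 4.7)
The plan is to split the analysis into three phases mirroring the three-term structure of the controller, showing in order: fixed-time reaching of the sliding manifold, finite-time consensus on the manifold, and asymptotic tracking of $x^*(t)$.

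\textbf{Phase 1 (fixed-time reaching).} Differentiating $s_i$ along the closed loop gives $\dot{s}_i = \dot{x}_i - u_{i1} = u_{i2} + d_i$. Using the Lyapunov candidate $V_s = \frac{1}{2}\sum_{i=1}^N s_i^\top s_i$, substituting $u_{i2}$ from (\ref{ui2}), and invoking $k_0 > D_0$ together with Assumption \ref{ass bounded disturbance}, the disturbance term is dominated by the discontinuous gain, yielding
\begin{equation}
\dot{V}_s \le -k_1 \|s\|_{\rho_1+1}^{\rho_1+1} - k_2 \|s\|_{\rho_2+1}^{\rho_2+1}.
\end{equation}
Lemma \ref{lemma 3} converts each $p$-norm into a power of $V_s$ (with $p=(\rho_1+1)/2<1$ and $q=(\rho_2+1)/2>1$), and Lemma \ref{lemma 1} then yields the existence of a finite $T_1$ with $s_i(t)\equiv 0$ for $t\ge T_1$.

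\textbf{Phase 2 (finite-time consensus).} For $t\ge T_1$ the sliding condition $\dot{s}_i=0$ forces $\dot{x}_i = u_{i1}$, eliminating both $d_i$ and $u_{i2}$. Stacking states and using $\mathcal{L}=\mathcal{D}\mathcal{D}^\top$, I would take the consensus candidate $V_c = \frac{1}{2}x^\top(\mathcal{L}\otimes I_n)x$, interpreted through Definitions \ref{def 2}--\ref{def 3} because of the sign function. A direct computation leads to a bound of the form
\begin{equation}
\dot{V}_c \;\le\; -\beta\,\mu\,\|(\mathcal{D}^\top\!\otimes\! I_n)x\|_1 \;+\; 2\bar{\psi}\, n^2|\mathcal{E}|\,\|(\mathcal{D}^\top\!\otimes\! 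I_n)x\|_1,
\end{equation}
where $\mu := \min_{i\in\mathcal{V}}\bar{\lambda}_{\min}\bigl((\nabla^2\tilde{L}_i)^{-1}\bigr)$ and the second term captures the worst-case residual from the $\psi_i$'s across all edges. Choosing $\beta$ as in (\ref{beta}) makes the bracketed coefficient strictly negative and equal to $-\epsilon\mu$, so a standard nonsmooth finite-time argument (e.g., a Filippov-based $\sqrt{V_c}$-decay estimate) delivers a time $T_2 > T_1$ after which $x_i(t) = x_j(t)$ for all $i,j\in\mathcal{V}$.

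\textbf{Phase 3 (asymptotic tracking).} Once consensus holds, let $x(t) \equiv x_i(t)$. The sign-based consensus contribution averages to zero across the network, so summing $\dot{x}_i = u_{i1}$ over $i$ and dividing by $N$ gives $\dot{x} = -\frac{1}{N}\sum_{i=1}^N \psi_i$, which is precisely a continuous-time Newton-like flow on the aggregated penalized cost $\sum_i \tilde{L}_i$. Because $\rho_i(t)\to\infty$ and $\sigma_i(t)\to 0$ by construction, the log-barrier tracking analysis of \cite[Lemma 1]{fazlyab2017prediction} applies and yields $x_i(t)\to x^*(t)$, completing the proof.

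\textbf{Expected main obstacle.} Phase 2 is where the work lies: the agent-dependent prefactor $(\nabla^2 \tilde{L}_i)^{-1}$ breaks the clean Laplacian-style cancellation one would get with a uniform gain, and the $\psi_i$ terms act as persistent disturbances to the consensus dynamics. Getting the bound tight enough to produce the explicit constant $2\bar{\psi}\,n^2|\mathcal{E}|/\mu$ in (\ref{beta}) requires careful use of $\|\psi_i\|_2 \le \bar{\psi}$, the edge-by-edge decomposition via $\mathcal{D}$, and uniform lower bounds on $\lambda_{\min}((\nabla^2\tilde{L}_i)^{-1})$. A secondary technical issue is ensuring each $x_i(t)$ remains inside the feasibility domain $D_i(t)$ so that $\nabla^2 \tilde{L}_i$ stays invertible; this relies on the initialization (\ref{initial}) together with Assumptions \ref{ass bound obj}--\ref{ass bound ineq cons} to rule out finite-escape toward the barrier during the reaching transient.
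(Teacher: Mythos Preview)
Your Phases 1 and 3 are close to the paper's argument. The genuine gap is Phase 2: the quadratic candidate $V_c=\tfrac12 x^\top(\mathcal{L}\otimes I_n)x$ does not produce the bound you claim. Differentiating along $\dot x_i=u_{i1}$ gives, for the sign-based consensus part,
\[
-\beta\sum_{i}\Bigl[\textstyle\sum_{j\in\mathcal{N}_i}(x_i-x_j)\Bigr]^{\!\top}(\nabla^2\tilde L_i)^{-1}\Bigl[\textstyle\sum_{k\in\mathcal{N}_i}\mathrm{sign}(x_i-x_k)\Bigr],
\]
a bilinear expression $a_i^\top H_i b_i$ with $a_i\neq b_i$. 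The condition $H_i\succeq\mu I$ does \emph{not} imply $a_i^\top H_i b_i\ge\mu\,a_i^\top b_i$, and even with scalar $H_i$ one does not have $a_i^\top b_i=\sum_j\|x_i-x_j\|_1$: in dimension $n=1$ with two neighbors and edge differences $+1$ and $-2$, one gets $a_i=-1$, $b_i=0$, hence $a_ib_i=0$, while $\sum_j|x_i-x_j|=3$. So neither the eigenvalue step nor the conversion to $\|(\mathcal{D}^\top\otimes I_n)x\|_1$ goes through, and the term $-\beta\mu\|(\mathcal{D}^\top\otimes I_n)x\|_1$ in your displayed inequality is unjustified. You correctly anticipated that the agent-dependent prefactor $(\nabla^2\tilde L_i)^{-1}$ is the obstacle; the quadratic $V_c$ is precisely the choice it breaks.

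The paper sidesteps this by using the nonsmooth candidate $V_{S2}=\|(\mathcal{D}^\top\otimes I_n)x\|_1$. Its Clarke generalized gradient lies in $(\mathcal{D}\otimes I_n)\,\mathrm{SGN}\bigl[(\mathcal{D}^\top\otimes I_n)x\bigr]$, i.e., the very sign vector appearing in $u_{i1}$. That alignment turns the consensus contribution into a genuine quadratic form $\beta\,\xi^\top(\mathcal{D}^\top\otimes I_n)(\nabla^2\tilde L)^{-1}(\mathcal{D}\otimes I_n)\xi$ with $\xi\in\mathrm{SGN}\bigl[(\mathcal{D}^\top\otimes I_n)x\bigr]$, which \emph{can} be lower-bounded by $\beta\,\bar\lambda_{\min}\!\bigl((\nabla^2\tilde L)^{-1}\bigr)\|(\mathcal{D}\otimes I_n)\xi\|_2^2$. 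Because $\xi$ has entries of magnitude at most one and $\|(\mathcal{D}\otimes I_n)\xi\|_2\ge1$ whenever some edge is not at consensus, while the $\Psi$ term contributes only the additive constant $2\bar\psi n^2|\mathcal{E}|$ (not a term proportional to the edge differences, as you wrote), the set-valued Lie derivative satisfies $\dot{\tilde V}_{S2}\le-\epsilon$ away from consensus, and finite-time consensus follows by direct integration. Your $\sqrt{V_c}$-decay route would have been acceptable had the bound held, but the Lyapunov function must match the sign structure of the controller for the heterogeneous Hessian inverses to collapse into a tractable quadratic form.
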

\begin{proof}
    The proof is divided into two parts. \textbf{Part 1:} Demonstrating that $u_{i2}(t)$ effectively counters the disturbance $d_i(t)$ within a predetermined timeframe; \textbf{Part 2:} Verifying that $u_{i1}(t)$ derives the state $x_i(t)$ towards the desired trajectory $x^*(t)$. The structure of our proof follows the framework of the proof outlined in \cite{sun2022distributed2023, zhang2023continuous}.

    \textbf{Part 1:} Taking the derivative of $s_i(t)$ results in 
     \begin{equation} \label{sdot}
         \dot{s}_i(t) = \dot{x}_i(t) - u_{i1}(t).
     \end{equation} 
     By substituting (\ref{dynamic}), (\ref{ui}), and (\ref{ui2}) into (\ref{sdot}), we acquire
     \begin{equation} \label{sdot substitute}
         \dot{s}_i = -k_0\text{sign}(s_i) - k_1 \text{sig}(s_i)^{\rho_1} - k_2\text{sig}(s_i)^{\rho_2} + d_i.
     \end{equation}
     We introduce a Lyapunov candidate $V_{S1}(t) = \frac{1}{2} \sum_{i=1}^{N} \sum_{k=1}^{n} s_{ik}^2$. The time derivative of $V_{S1}(t)$ according to (\ref{sdot substitute}) is determined by
     \begin{equation} \label{compact dynamic}
         \begin{split}
             \dot{V}_{S1} =  \sum_{i=1}^{N} \sum_{k=1}^{n} \left(-k_0|s_{ik}| - k_1 |s_{ik}|^{\rho_1 + 1} - k_2  |s_{ik}|^{\rho_2 + 1} +  s_{ik}d_{ik} \right)
         \end{split}
     \end{equation}
    Given Assumption \ref{ass bounded disturbance}, it can be deduced that
    \begin{equation}
    \begin{split}
         \dot{V}_{S1} \leq - \sum_{i=1}^{N} \sum_{k=1}^{n} (k_0 - \|d_i\|_{\infty}) |s_{ik}| - k_1  (s_{ik}^2)^ {\frac{\rho_1 + 1 }{2}}
         - k_2  (s_{ik}^2)^{\frac{\rho_2 + 1}{2}}
    \end{split}
    \end{equation}
     Using the condition $k_0> D_0$, it follows that
     \begin{align} \nonumber
         \dot{V}_{S1} \leq &-2 ^{\frac{\rho_1 + 1}{2}} k_1 \sum_{i=1}^{N} \sum_{k=1}^{n} \left( \frac{1}{2} s_{ik}^2 \right)^{\frac{\rho_1 + 1}{2}} \\
         &- 2 ^{\frac{\rho_2 + 1}{2}} k_2 \sum_{i=1}^{N} \sum_{k=1}^{n} \left( \frac{1}{2} s_{ik}^2 \right)^{\frac{\rho_2 + 1}{2}} 
     \end{align}
     As indicated by Lemma \ref{lemma 3} and given that $0 < \rho_1 < 1$ and $ \rho_2 > 1$, we obtain
     \begin{equation}
         \dot{V}_{S1} \leq -2  ^{\frac{\rho_1 + 1}{2}} k_1 V_{S1}^{\frac{\rho_1 + 1}{2}} - 2 ^{\frac{\rho_2 + 1}{2}} k_2 (Nn)^{\frac{1 - \rho_2}{2}} V_{S1}^{\frac{\rho_2 + 1}{2}}
     \end{equation}
     Then by using Lemma \ref{lemma 1}, this results in a fixed time for reaching, limited to $T_d \leq \frac{1}{2^{\frac{\rho_1-1}{2}} k_1 (1-\rho_1 )} + \frac{1}{2^{\frac{\rho_2 -1}{2}} k_2(N n)^{\frac{1-\rho_2}{2}} (\rho_2-1)}$. Therefore, $u_{i2}(t)$ effectively counters the disturbance $d_i(t)$ in fixed-time.

     \textbf{Part 2:} We find $s_i$ and its derivative $\dot{s}_i$ both equal to zero for all $t \geq T_d$. From (\ref{dynamic}) and (\ref{ui1}), we derive the system's behavior as:
     \begin{equation} \label{system without ui2} \begin{split}
         \dot{x}_i(t)=& - \beta \left( \nabla^2 \tilde{L}_i(x_i, t) \right)^{-1} \sum_{j \in \mathcal{N}_i} \text{sign}(x_i - x_j) \\
         &- \left( \nabla^2 \tilde{L}_i(x_i, t) \right)^{-1} \left( \nabla \tilde{L}_i(x_i, t) + \frac{\partial}{\partial t} \nabla \tilde{L}_i(x_i, t) \right)
              \end{split}
     \end{equation}

     Define the inverse of the Hessian matrix for the penalized objective function as a diagonal matrix combining the inverses of individual Hessians,
$\scalemath{0.9}{\left( \nabla^2 \tilde{L}(x, t) \right)^{-1} = \text{diag} \left\{ \left( \nabla^2 \tilde{L}_1(x_1, t) \right)^{-1}, \ldots, \left( \nabla^2 \tilde{L}_n(x_n, t) \right)^{-1} \right\}}$,
where $x$ is composed of the stacked states of all agents,
$x = \begin{bmatrix} x_1^\top & \ldots & x_n^\top \end{bmatrix}^\top$,
and $\Psi$ represents the collection of auxiliary variables for all agents,
$\Psi = \begin{bmatrix} \psi_1^\top & \ldots & \psi_n^\top \end{bmatrix}^\top$.

According to \cite[Lemma 2]{fazlyab2017prediction}, since the objective functions $f_i(x_i, t)$ are strongly convex and $\sigma_i(t)$ is strictly positive, it follows that $\nabla^2 \tilde{L}_i(x_i, t)$ is $m$-strongly convex for $x \in D_i(t)$. Therefore, $\left( \nabla^2 \tilde{L}_i(x_i, t) \right)^{-1}$ exists and is bounded. The dynamics of $\dot{\nabla}^2 \tilde{L}_i$ can be written as $\dot{\nabla}^2 \tilde{L}_i(x_i, t)=\nabla^2 \tilde{L}_i(x_i, t) \dot{x}_i+\frac{\partial}{\partial t} \nabla \tilde{L}_i(x_i, t)$. Substituting $\dot{x}_i$ from (\ref{system without ui2}) into this result gives the closed-loop dynamics $\dot{\nabla}^2 \tilde{L}_i(x_i, t)=  - \beta \sum_{j \in \mathcal{N}_i} \text{sign}(x_i - x_j) - \nabla \tilde{L}_i(x_i, t)$, which implies that
each $\nabla \tilde{L}_i(x_i, t)$ is bounded for all $t \geq 0$ by employing the input-to-state stability analysis \cite{khalil}. Based on Assumptions \ref{ass bound obj} and \ref{ass bound ineq cons}, it is clear that $\frac{\partial}{\partial t} \nabla \tilde{L}_i(x_i, t)$ is bounded for all $t \geq 0$. Hence, we can conclude that $\psi_i(t)$ remains bounded for every $i \in \mathcal{V}$ and for all $t \geq 0$, given that $\left(\nabla^2 \tilde{L}_i(x_i, t)\right)^{-1}$, $\nabla \tilde{L}_i(x_i, t)$, and $\frac{\partial}{\partial t} \nabla \tilde{L}_i(x_i, t)$ are all bounded.

    Consider the following Lyapunov candidate
\begin{equation}
         V_{S2}(t) = \norm{ (\mathcal{D}^\top \otimes I_n) x}_1.
\end{equation}

The system dynamic (\ref{system without ui2}) is compactly expressed as:
\begin{equation}
    \dot{x} = -\beta \left( \nabla^2 \tilde{L}(x, t) \right)^{-1} (\mathcal{D} \otimes I_n)\text{sign} \left( (\mathcal{D}^\top \otimes I_n) x \right) + \Psi.
\end{equation}

    Given the nature of $V_{S2}(t)$, where it is locally Lipschitz continuous but nonsmooth at some points and on account of Definition \ref{def 2}, the generalized gradient of $V_{S2}(t)$ is obtained as 
\begin{equation}
         \partial V_{S2}(t) = (\mathcal{D}^\top \otimes I_n)^\top \left\{ \text{SGN} \left( (\mathcal{D}^\top \otimes I_n) x \right) \right\}
     \end{equation}
     where \text{SGN(.)} is defined in \cite{cortes2008discontinuous}.
     Following Definition \ref{def 3}, the set-valued Lie derivative of $V_{S2}(t)$ is represented as:
\begin{equation} \label{vs2 dot}
    \dot{\tilde{V}}_{S2}(t) = \bigcap_{\xi \in \text{SGN}[(\mathcal{D}^\top \otimes I_n) x]} \xi^\top (\mathcal{D}^\top \otimes I_n) K[f]
\end{equation}
Here, $K[f] = \Psi - \beta [\nabla^2 \tilde{L}(x, t)]^{-1}(\mathcal{D} \otimes I_n)\text{SGN}[(\mathcal{D}^\top \otimes I_n) x]$ denotes the set-valued Filippov map of the system (\ref{system without ui2}). Due to the presence of an intersection operation on the right side of (\ref{vs2 dot}), it indicates that while $\dot{\tilde{V}}_{S2}(t)$ is not empty and there exists $\xi \in \text{SGN}[(\mathcal{D}^\top \otimes I_n)x]$ such that $\xi^\top (\mathcal{D}^\top \otimes I_n) \tilde{f} < 0$ for all $\tilde{f} \in K[f]$, as a result $\dot{\tilde{V}}_{S2}(t)$ moves into the negative half-plane of the real axis. Arbitrarily select $\eta \in \text{SGN}[(\mathcal{D}^\top \otimes I_n)x]$. Choose $\xi_k = \text{sign}[(\mathcal{D}^\top \otimes I_n)_k \cdot x]$ if $\text{sign}[(\mathcal{D}^\top \otimes I_n)_k \cdot x] \neq 0$ and choose $\xi_k = \eta_k$ if $\text{sign}[(\mathcal{D}^\top \otimes I_n)_k \cdot x] = 0$, where $\xi_k$ and $\eta_k$ represent the $k$th element in vectors $\xi$ and $\eta$, respectively. Should $\dot{\tilde{V}}_{S2}(t) \neq 0$, we assume a scenario with $\tilde{a} \in \dot{\tilde{V}}_{S2}(t)$, leading to the following analysis:
\begin{align} \nonumber
        &\tilde{a} = -\beta \left( \xi^\top (\mathcal{D}^\top \otimes I_n) \left( \nabla^2 \tilde{L}(x, t) \right)^{-1} (\mathcal{D} \otimes I_n)\eta \right) \\\nonumber
        &+ \xi^\top (\mathcal{D}^\top \otimes I_n) \Psi \\\nonumber
        &\leq -\beta \left( \xi^\top (\mathcal{D}^\top \otimes I_n) \left( \nabla^2 \tilde{L}(x, t) \right)^{-1} (\mathcal{D} \otimes I_n)\xi \right)\\\nonumber
        &+ \xi^\top (\mathcal{D}^\top \otimes I_n) \Psi\\
        &\leq -\beta \bar{\lambda}_{\min} \left( (\nabla^2 \tilde{L})^{-1} \right) \|(\mathcal{D} \otimes I_n)\xi\|_2^2 + 2 \bar{\psi}n^2|\mathcal{E}|.
    \end{align}
    If any two connected nodes $(i_2, j_2) \in \mathcal{E}$ have not the same position, $x_{i_2} \neq x_{j_2}$, then $\norm{(\mathcal{D} \otimes I_n)\xi} \geq 1$. Then it can be deduced that
    \begin{equation}
        \tilde{a} \leq -\beta \bar{\lambda}_{\min} \left( (\nabla^2 \tilde{L})^{-1} \right) + 2 \bar{\psi}n^2|\mathcal{E}|.
    \end{equation}

    If $\beta \geq \frac{2 \bar{\psi} n^2 |\mathcal{E}|}{\min_{i \in \mathcal{V}} \left\{ \bar{\lambda}_{\min} \left((\nabla^2 \tilde{L}_i)^{-1} \right)\right\}} + \epsilon = \frac{2 \bar{\psi} n^2 |\mathcal{E}|}{\bar{\lambda}_{\min}\left((\nabla^2 \tilde{L})^{-1}\right)} + \epsilon$, this ensures that if there exists an edge $(i_2, j_2) \in \mathcal{E}$ such that $x_{i_2} \neq x_{j_2}$, then $\tilde{a} \leq -\epsilon$. This condition leads to  $\dot{\tilde{V}}_{S2}(t) \leq - \epsilon$ under the same circumstances. According to the Lebesgue's theory for Riemann integrability, a function on a compact interval is Riemann integrable if and only if it is bounded and the set of its discontinuous points has measure zero \cite{ko2006mathematical}. This means that despite some irregularities at certain moments, the derivative $\dot{V}_{S2}(t)$ can still be integrated over time.

    Thus, it follows that
    \begin{equation} \label{vs2-vs20}
        V_{S2}(t) - V_{S2}(0) = \int_0^t \dot{V}_{S2}(\tau) d\tau \leq -\epsilon t
    \end{equation}
    When $t>0$ and there's a connection $(i_2, j_2) \in \mathcal{E}$ in the network with different values at both nodes $(x_{i_2} \neq x_{j_2})$, then we can deduce that
    \begin{equation}
        V_{S2}(t) = \|(\mathcal{D}^\top \otimes I_n) x\|_1 = \frac{1}{2} \sum_{i=1}^{n} \sum_{j \in \mathcal{N}_i} \|x_i - x_j\|_1.
    \end{equation}

    In other words, for $V_{S2}(t)$ to be zero, every pair of connected nodes $(i, j) \in \mathcal{E}$ must have matching positions $(x_i(t)=x_j(t))$ at time $t$. As a result, $V_{S2}(t)$ will converge to zero in a finite time, and the convergence time is smaller than the initial value of $V_{S2}(0)/\epsilon$. Furthermore, as $V_{S2}(t)$ approaches zero, $\|x_i - x_j\|_1 \rightarrow 0$ for all $i \in \mathcal{V}$ and $j \in \mathcal{N}_i$. Given that our network is connected and undirected, this condition ensures that all agents will achieve a consensus in a finite time. Specifically, there will come a moment, denoted as $T_2$ such that $\|x_i(t) - \frac{1}{N} \sum_{j=1}^{N} x_j(t)\|_2 = 0$ for all $i \in \mathcal{V}$ and for all $t > T_2$.

    Therefore, for $t \geq T_2$, the states of the system  can achieve consensus, i.e., $x_i(t)=x_j(t), \forall i,j \in \mathcal{V}$ and the system (\ref{system without ui2}) is transformed to 
    \begin{equation} \label{system nabla L}
         \dot{x}_i(t)=  - \left( \nabla^2 \tilde{L}_i(x_i, t) \right)^{-1} \left( \nabla \tilde{L}_i(x_i, t) + \frac{\partial}{\partial t} \nabla \tilde{L}_i(x_i, t) \right)
     \end{equation}
     Define the following Lyapunov function candidate as
    \begin{equation}
          V_{S3}(t) = \frac{1}{2} \left( \sum_{i=1}^{N} \nabla \Tilde{L}_i(x_i, t) \right)^\top \left( \sum_{i=1}^{N} \nabla \Tilde{L}_i(x_i, t) \right).
    \end{equation}

    Taking the derivative of $V_{S3}(t)$ with respect to the system described in (\ref{system nabla L}) results in
    \begin{equation} \begin{split} \label{vs3dot}
        \scalemath{0.85}{\dot{V}_{S3}(t) = \sum_{i=1}^{N} \nabla \Tilde{L}_i(x_i, t)^\top 
        \times \left( \sum_{i=1}^{N} \nabla^2 \Tilde{L}_i(x_i, t) \dot{x}_i + \sum_{i=1}^{N} \frac{\partial}{\partial t} \nabla \Tilde{L}_i(x_i, t) \right)}
        \end{split}
    \end{equation}
    Then by substituting (\ref{system nabla L}) into (\ref{vs3dot}), we have
    \begin{equation} \begin{split}
        \scalemath{0.85}{\dot{V}_{S3}(t)= -\left(\sum_{i=1}^{N} \nabla \Tilde{L}_i(x_i, t) \right)^\top \left(\sum_{i=1}^{N} \nabla \Tilde{L}_i(x_i, t)\right)
        =-2 V_{S3}  \leq 0.}
          \end{split}
    \end{equation}
    which indicates that $V_{S3}(t) = e^{-2t}V_{S3}(0)$ for all $t \geq 0$. It can be concluded that $V_{S3}(t)$ exponentially converges to zero, and thus, $\sum_{i=1}^{N} \nabla \tilde{L}_i(x_i, t)$ exponentially converges to $0$.  
\end{proof}
\section{numerical simulation results}

The simulations in this section demonstrate the efficiency of the theoretical insights developed in Section III, focusing on a system comprising four agents, as described in (\ref{dynamic}). We explore two distinct scenarios characterized by different communication topologies illustrated in Fig. \ref{fig:1}, each represented by a unique Laplacian matrix. The objective is to demonstrate the influence of network topology on the system's ability to suppress disturbances and achieve optimal trajectories for the global cost function. 

\subsection{Connected network depicted in Fig. 1(a) } \label{simu a}
In the first scenario, agents are connected via an undirected graph outlined by 
\begin{equation*}
  \mathcal{L}_a = [2, -1, 0, -1; -1, 2, -1, 0; 0,-1, 2, -1;-1, 0, -1, 2 ].
\end{equation*} The individual cost function for each agent is defined as follows:
\begin{align*}
    f_1(x, t) &= (x - \sin(t))^2 + 5,\\  f_2(x, t) &= (x + 3\sin(t))^2 + \cos(t)\\
    f_3(x, t) &= (x - \cos(t))^2 -5, \\ f_4(x, t)&= (x - \sin(t))^2.
\end{align*}

\begin{figure}[!t]
\centering
\subfigure[Undirected connected graph]{\includegraphics[trim = 1mm 0mm 0mm 5mm, width=0.16\textwidth]{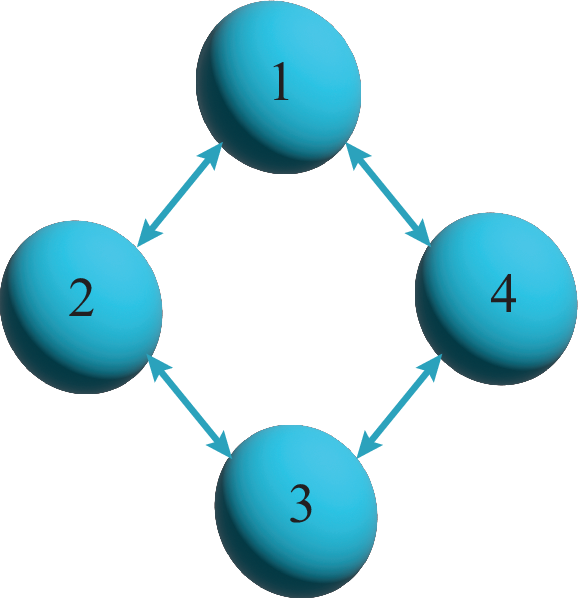}}\label{fig:1a}
\hspace{1cm}
\subfigure[Undirected less connected graph] {\includegraphics[trim = 0mm 0mm 0mm 30mm, width=0.16\textwidth]{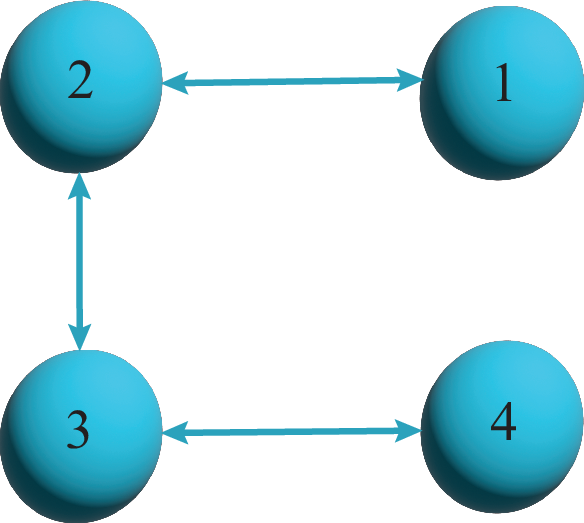}}\label{fig:1b}
\caption{Network topology of four agents} \label{fig:1}
\end{figure}

This configuration makes it evident that the optimal trajectory for the global cost function varies with time.
The system is subjected to disturbances as follows:
\begin{align*}
d_1(t) &= 3 \sin(t) + 2, \quad d_2(t) = 2 \sin(0.5\pi t), \\
d_3(t) &= 2, \hspace{1.8cm} d_4(t) = 1.5 \cos(t) + 0.5.
\end{align*}
We use the control law in (\ref{controller}) applied to system (\ref{dynamic}), with initial states set to \(x_1(0) = -2\), \(x_2(0) = -1\), \(x_3(0) = 1\), and \(x_4(0) = 3\), and control parameters as \(k_0 = 10\), \(k_1 = k_2 = \nu = 3\), \(\rho_1 = 0.5\), \(\rho_2 = \beta = 3\), and \(\rho = 18\). Consider that for agents \(j \in \mathcal{N}_i\), a constraint function is defined as \(x_j(t) - \cos(t) \leq 0\). The parameters are set to \(\rho_i(t) = 10 e^{0.05t}\), and \(\sigma_i(t) = 30 e^{-t}\) for all agents \(i \in \mathcal{V}\), thereby satisfying the initial condition (\ref{initial}). The resulting state trajectories of the agents are depicted in Fig. \ref{fig:2}(a). It illustrates effective disturbance mitigation with agents' states over time. The outcome of the constraint is depicted in Fig. \ref{fig:2}(b). During our simulation, the constraint function \(x_i(t) - \cos(t) \leq 0\), \( \forall i \in [1,\ldots,4]\). As a result, \(x_i(t) - \cos(t) - \sigma_i(t)\), \( \forall i \in [1,\ldots,4]\), consistently stayed negative.

\begin{figure*}
\centering
\subfigure[]{\includegraphics[ trim = 0mm 10mm 0mm 20mm, width=0.45\textwidth]{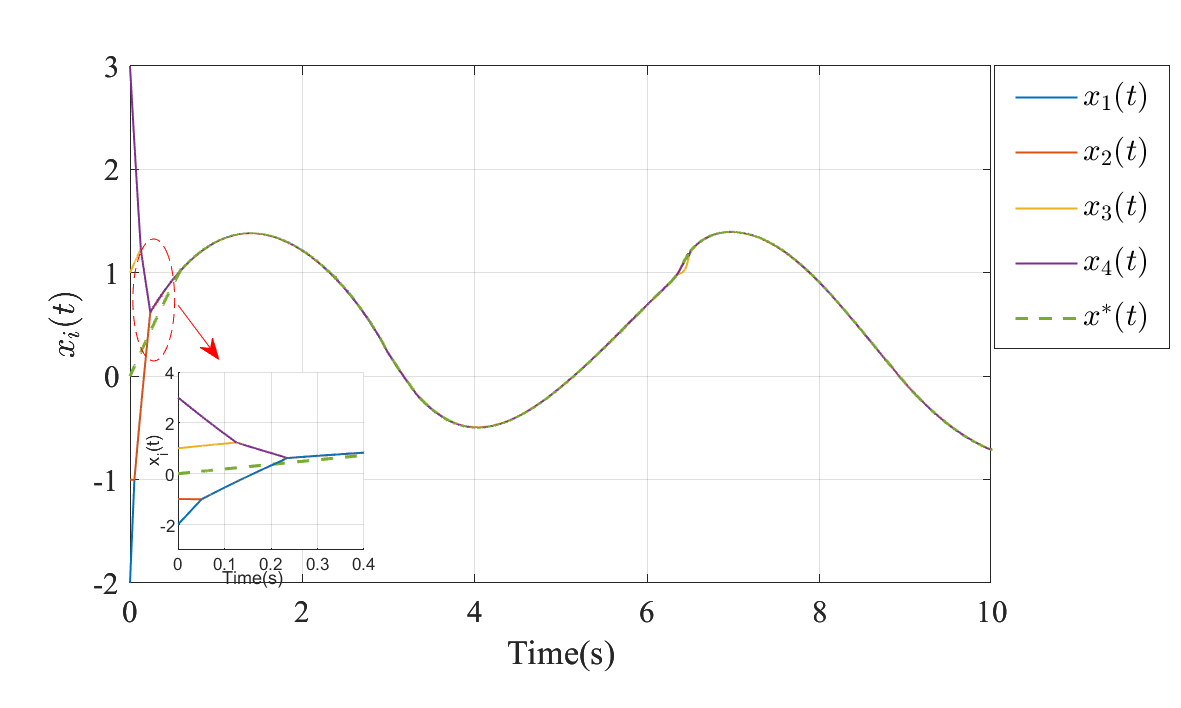}}\label{fig:2a}
\quad
\subfigure[] {\includegraphics[trim = 0mm 10mm 0mm 20mm, width=0.45\textwidth]{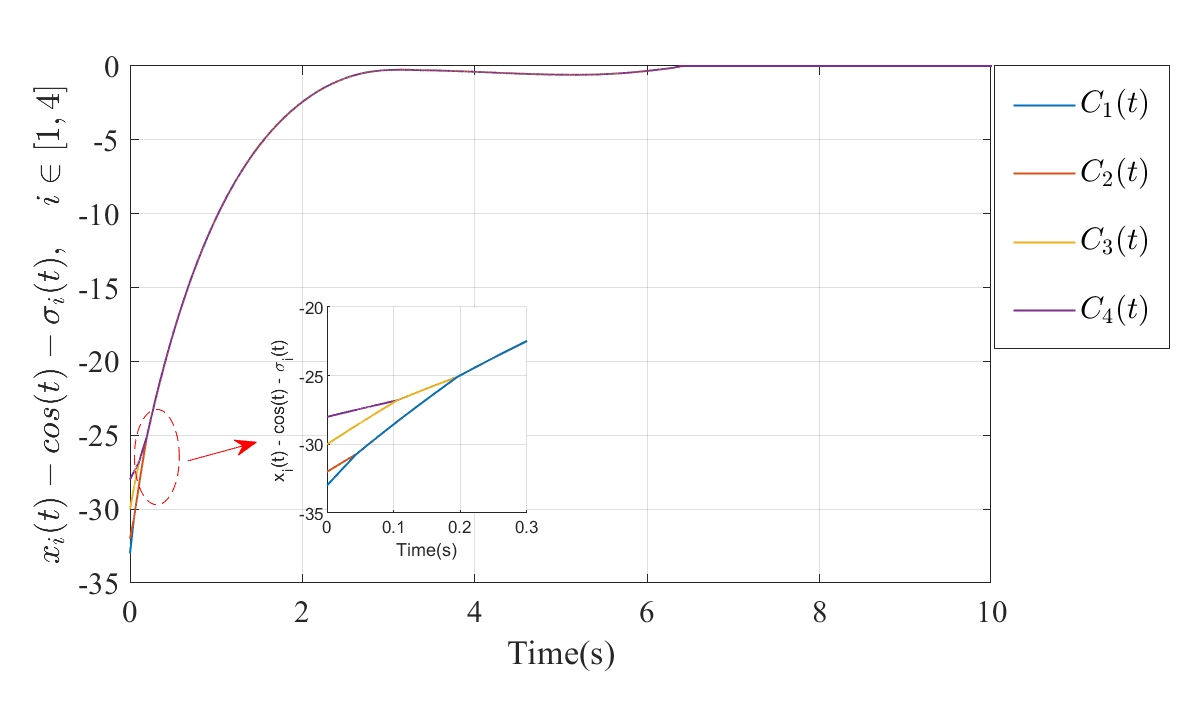}}\label{fig:2b}
\caption{Simulation results of all agents with first-order dynamics under the proposed controller. (a) State trajectories. (b) The constraint results.} \label{fig:2}
\end{figure*}

\begin{figure*}
\centering
\subfigure[]{\includegraphics[ trim = 0mm 10mm 0mm 10mm, width=0.45\textwidth]{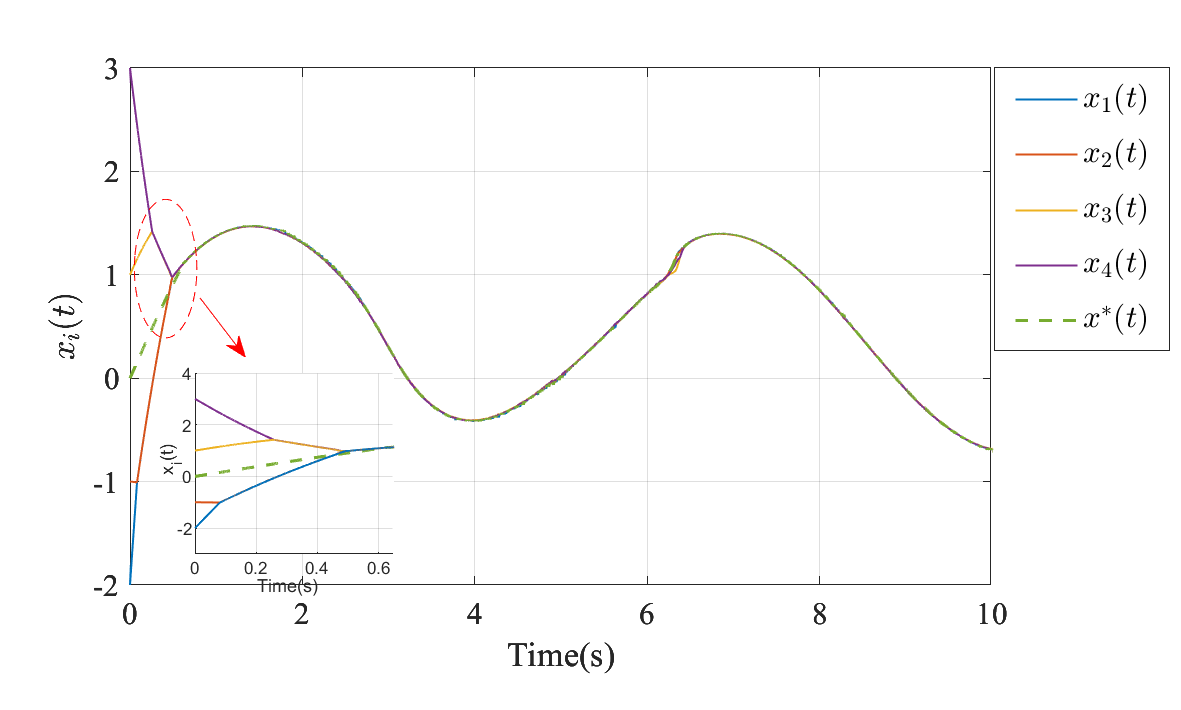}}\label{fig:3a}
\quad
\subfigure[] {\includegraphics[trim = 0mm 10mm 0mm 10mm, width=0.45\textwidth]{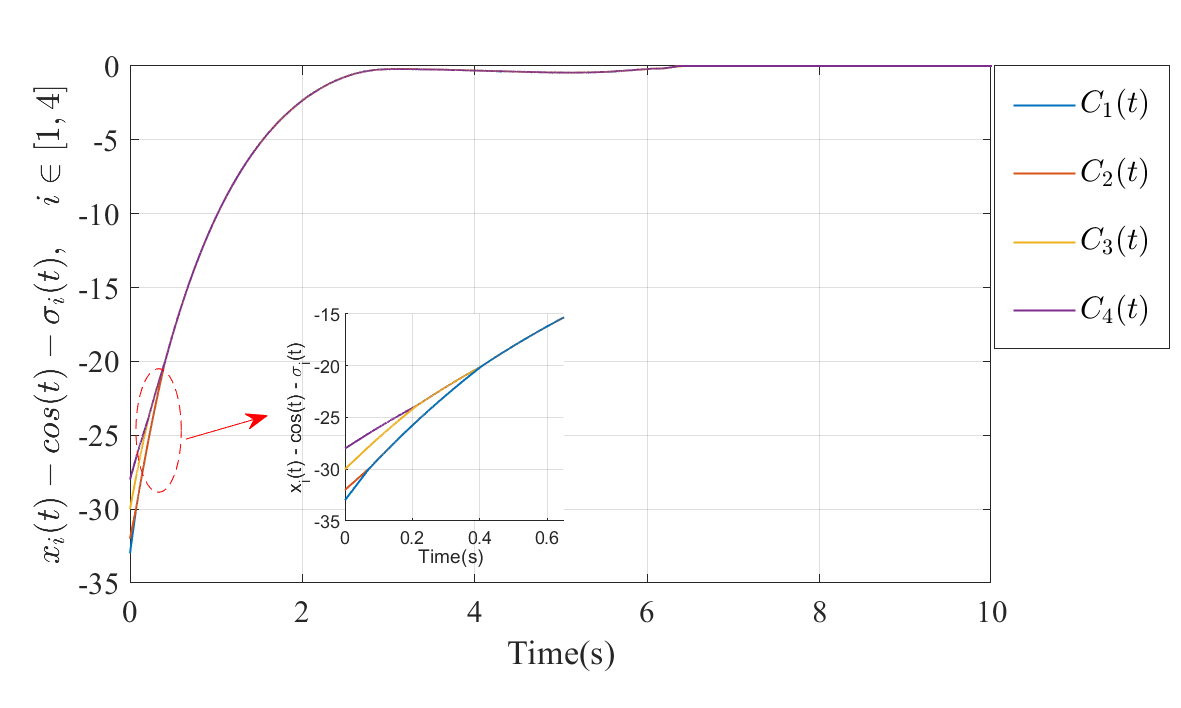}}\label{fig:3b}
\caption{Simulation results of all agents with first-order dynamics under the proposed controller. (a) State trajectories. (b) The constraint results.} \label{fig:3}
\end{figure*}

\subsection{Less-connected network depicted in Fig. 1(b)}
The second scenario features a less-connected network, described by the following Laplacian matrix:
\begin{equation*}
\mathcal{L}_b = [1, -1, 0, -0; -1, 2, -1, 0; 0,-1, 2, -1;0, 0, -1, 1 ]
\end{equation*}
The same initial conditions and parameters outlined in \ref{simu a} are considered. The results pertaining to the less connected network configuration (Fig. 1(b)) are presented in Fig \ref{fig:3}. As illustrated in Fig. \ref{fig:3}(a), all agents track the optimal trajectory which aligns with the findings of \textit{Theorem} \ref{theorem 1}. The constraint outcomes are demonstrated in Fig. \ref{fig:3}(b). However, in this less-connected network, the reduced connectivity affects the convergence speed. Specifically, there is a noticeable delay of approximately 0.25 seconds compared to the more connected scenario in \ref{simu a}. This delay can be attributed to the longer communication paths between certain agents, as fewer direct links result in slower information exchange across the network. This indicates that the degree of connectivity directly influences the system’s responsiveness.

Despite the slower convergence, the methodology remains robust in the less connected network. The agents still reach a consensus and follow the optimal trajectory, demonstrating the algorithm’s ability to perform well under reduced connectivity. This shows that the approach is adaptable to various network configurations and can tolerate less favorable connectivity without significant performance loss.

\section{conclusion}

In conclusion, this paper has developed a distributed continuous-time optimization framework to effectively handle the complexities of time-varying cost functions and constraints in multi-agent systems, particularly in the presence of disturbances. Through the integration of log-barrier functions for constraint management, coupled with an innovative integral sliding mode control for disturbance rejection, we have proposed a comprehensive solution that enhances the robustness and applicability of distributed optimization algorithms. Our works not only fill a critical gap in the current literature but also set a foundation for future exploration into more complex dynamics and situations involving disturbances with bounded derivatives.

\bibliographystyle{IEEEtran}

\begin{thebibliography}{10}
\providecommand{\url}[1]{#1}
\csname url@samestyle\endcsname
\providecommand{\newblock}{\relax}
\providecommand{\bibinfo}[2]{#2}
\providecommand{\BIBentrySTDinterwordspacing}{\spaceskip=0pt\relax}
\providecommand{\BIBentryALTinterwordstretchfactor}{4}
\providecommand{\BIBentryALTinterwordspacing}{\spaceskip=\fontdimen2\font plus
\BIBentryALTinterwordstretchfactor\fontdimen3\font minus \fontdimen4\font\relax}
\providecommand{\BIBforeignlanguage}[2]{{%
\expandafter\ifx\csname l@#1\endcsname\relax
\typeout{** WARNING: IEEEtran.bst: No hyphenation pattern has been}%
\typeout{** loaded for the language `#1'. Using the pattern for}%
\typeout{** the default language instead.}%
\else
\language=\csname l@#1\endcsname
\fi
#2}}
\providecommand{\BIBdecl}{\relax}
\BIBdecl

\bibitem{cherukuri2016initialization}
A.~Cherukuri and J.~Cortes, ``Initialization-free distributed coordination for economic dispatch under varying loads and generator commitment,'' \emph{Automatica}, vol.~74, pp. 183--193, 2016.

\bibitem{yi2022primal}
X.~Yi, S.~Zhang, T.~Yang, T.~Chai, and K.~H. Johansson, ``A primal-dual sgd algorithm for distributed nonconvex optimization,'' \emph{IEEE/CAA Journal of Automatica Sinica}, vol.~9, no.~5, pp. 812--833, 2022.

\bibitem{li2017fixed}
C.~Li, X.~Yu, X.~Zhou, and W.~Ren, ``A fixed time distributed optimization: A sliding mode perspective,'' in \emph{IECON 2017-43rd Annual Conference of the IEEE Industrial Electronics Society}.\hskip 1em plus 0.5em minus 0.4em\relax IEEE, 2017, pp. 8201--8207.

\bibitem{lin2016distributed}
P.~Lin, W.~Ren, and J.~A. Farrell, ``Distributed continuous-time optimization: nonuniform gradient gains, finite-time convergence, and convex constraint set,'' \emph{IEEE Transactions on Automatic Control}, vol.~62, no.~5, pp. 2239--2253, 2016.

\bibitem{feng2019finite}
Z.~Feng, G.~Hu, and C.~G. Cassandras, ``Finite-time distributed convex optimization for continuous-time multiagent systems with disturbance rejection,'' \emph{IEEE Transactions on Control of Network Systems}, vol.~7, no.~2, pp. 686--698, 2019.

\bibitem{huang2019distributed}
B.~Huang, Y.~Zou, Z.~Meng, and W.~Ren, ``Distributed time-varying convex optimization for a class of nonlinear multiagent systems,'' \emph{IEEE Transactions on Automatic control}, vol.~65, no.~2, pp. 801--808, 2019.

\bibitem{wangBo2020distributed}
B.~Wang, S.~Sun, and W.~Ren, ``Distributed continuous-time algorithms for optimal resource allocation with time-varying quadratic cost functions,'' \emph{IEEE Transactions on Control of Network Systems}, vol.~7, no.~4, pp. 1974--1984, 2020.

\bibitem{lee2015multirobot}
S.~G. Lee, Y.~Diaz-Mercado, and M.~Egerstedt, ``Multirobot control using time-varying density functions,'' \emph{IEEE Transactions on robotics}, vol.~31, no.~2, pp. 489--493, 2015.

\bibitem{jadbabaie2015online}
A.~Jadbabaie, A.~Rakhlin, S.~Shahrampour, and K.~Sridharan, ``Online optimization: Competing with dynamic comparators,'' in \emph{Artificial Intelligence and Statistics}.\hskip 1em plus 0.5em minus 0.4em\relax PMLR, 2015, pp. 398--406.

\bibitem{rahili2016distributed}
S.~Rahili and W.~Ren, ``Distributed continuous-time convex optimization with time-varying cost functions,'' \emph{IEEE Transactions on Automatic Control}, vol.~62, no.~4, pp. 1590--1605, 2016.

\bibitem{he2021continuous}
S.~He, X.~He, and T.~Huang, ``A continuous-time consensus algorithm using neurodynamic system for distributed time-varying optimization with inequality constraints,'' \emph{Journal of the Franklin Institute}, vol. 358, no.~13, pp. 6741--6758, 2021.

\bibitem{wang2020distributed}
Q.~Wang, Z.~Duan, Y.~Lv, Q.~Wang, and G.~Chen, ``Distributed model predictive control for linear--quadratic performance and consensus state optimization of multiagent systems,'' \emph{IEEE Transactions on Cybernetics}, vol.~51, no.~6, pp. 2905--2915, 2020.

\bibitem{sun2020distributed}
S.~Sun and W.~Ren, ``Distributed continuous-time optimization with time-varying objective functions and inequality constraints,'' in \emph{2020 59th IEEE Conference on Decision and Control (CDC)}.\hskip 1em plus 0.5em minus 0.4em\relax IEEE, 2020, pp. 5622--5627.

\bibitem{zhang2023continuous}
R.~Zhang and G.~Guo, ``Continuous distributed robust optimization of multi-agent systems with time-varying cost,'' \emph{IEEE Transactions on Control of Network Systems}, 2023.

\bibitem{sun2022distributed2023}
S.~Sun, J.~Xu, and W.~Ren, ``Distributed continuous-time algorithms for time-varying constrained convex optimization,'' \emph{IEEE Transactions on Automatic Control}, 2022.

\bibitem{edmonds2022undirected}
C.~Edmonds, ``Undirected graph theory,'' \emph{Archive of Formal Proofs}, 2022.

\bibitem{shevitz1994lyapunov}
D.~Shevitz and B.~Paden, ``Lyapunov stability theory of nonsmooth systems,'' \emph{IEEE Transactions on automatic control}, vol.~39, no.~9, pp. 1910--1914, 1994.

\bibitem{polyakov2011nonlinear}
A.~Polyakov, ``Nonlinear feedback design for fixed-time stabilization of linear control systems,'' \emph{IEEE transactions on Automatic Control}, vol.~57, no.~8, pp. 2106--2110, 2011.

\bibitem{andrieu2008homogeneous}
V.~Andrieu, L.~Praly, and A.~Astolfi, ``Homogeneous approximation, recursive observer design, and output feedback,'' \emph{SIAM Journal on control and optimization}, vol.~47, no.~4, pp. 1814--1850, 2008.

\bibitem{zuo2014new}
Z.~Zuo and L.~Tie, ``A new class of finite-time nonlinear consensus protocols for multi-agent systems,'' \emph{International Journal of Control}, vol.~87, no.~2, pp. 363--370, 2014.

\bibitem{fazlyab2017prediction}
M.~Fazlyab, S.~Paternain, V.~M. Preciado, and A.~Ribeiro, ``Prediction-correction interior-point method for time-varying convex optimization,'' \emph{IEEE Transactions on Automatic Control}, vol.~63, no.~7, pp. 1973--1986, 2017.

\bibitem{ding2023distributed}
Y.~Ding, W.~Ren, and Z.~Meng, ``Distributed continuous-time resource allocation algorithm for networked double-integrator systems with time-varying non-identical hessians and resources,'' in \emph{2023 American Control Conference (ACC)}.\hskip 1em plus 0.5em minus 0.4em\relax IEEE, 2023, pp. 1159--1164.

\bibitem{verscheure2009time}
D.~Verscheure, B.~Demeulenaere, J.~Swevers, J.~De~Schutter, and M.~Diehl, ``Time-optimal path tracking for robots: A convex optimization approach,'' \emph{IEEE Transactions on Automatic Control}, vol.~54, no.~10, pp. 2318--2327, 2009.

\bibitem{filippov2013differential}
A.~F. Filippov, \emph{Differential equations with discontinuous righthand sides: control systems}.\hskip 1em plus 0.5em minus 0.4em\relax Springer Science \& Business Media, 2013, vol.~18.

\bibitem{cortes2008discontinuous}
J.~Cortes, ``Discontinuous dynamical systems,'' \emph{IEEE Control systems magazine}, vol.~28, no.~3, pp. 36--73, 2008.

\bibitem{khalil}
H.~Khalil, ``Nonlinear systems,'' \emph{Prentice Hall}, 2002.

\bibitem{ko2006mathematical}
S.~Ko, ``Mathematical analysis,'' 2006.

\end{thebibliography}

\end{document}